\documentclass{IEEEtran}  
\usepackage{cite}
\usepackage{amsmath}
\usepackage{amssymb}
\usepackage{amsfonts}
\usepackage{algorithmic}
\usepackage{mathdots}
\usepackage{graphicx}
\usepackage{textcomp}
\usepackage{enumitem}
\usepackage{amsthm}
\usepackage{mathtools}
\usepackage[hidelinks]{hyperref} 
\usepackage{xurl}                


%
%
\theoremstyle{definition}
\newtheorem{Def}{Definition}
\theoremstyle{definition}
\newtheorem{The}{Theorem}
\theoremstyle{definition}
\newtheorem{Lem}{Lemma}
\theoremstyle{definition}

\theoremstyle{definition}
\newtheorem{Pro}{Proposition}
\theoremstyle{definition}
\newtheorem*{Problem}{Problem formulation}
%
%
\theoremstyle{remark}
\newtheorem{Rem}{Remark}
\theoremstyle{remark}
\newtheorem{Ex}{Example}
\theoremstyle{remark}

\theoremstyle{remark}
\newtheorem{Con}{Conjecture}
\theoremstyle{remark}

\theoremstyle{remark}

%
%
\usepackage{mdframed}

%
%

\DeclareFontEncoding{LS1}{}{}
\DeclareFontSubstitution{LS1}{stix2}{m}{n}
\DeclareMathAlphabet{\mathscr}{LS1}{stix2scr}{m}{n}
\newcommand{\err}{\ensuremath{ \mathscr{r} }}
%
%
\newcommand{\T}{\mathsf{T}}
\renewcommand{\i}{\mathbf{i}}
\renewcommand{\j}{\mathbf{j}}
\renewcommand{\k}{\mathbf{k}}

\renewcommand{\Re}{\operatorname{Re}}
\renewcommand{\Im}{\operatorname{Im}}

%
%

\newcommand{\RR}{\mathbb{R}}
\newcommand{\CC}{\mathbb{C}}
\newcommand{\HH}{\mathbb{H}}

%
%

%
%

\newcommand{\HHn}{\mathbb{H}^{n}}

\newcommand{\HHnn}{\mathbb{H}^{n\times n}}

\newcommand{\HHmn}{\mathbb{H}^{m\times n}}
%
%

%
%

%
%

%
%




\newcommand*{\conj}[1]{\bar{#1}}  
\newcommand*{\conjm}[1]{\overline{#1}} 

%
%

%
%










\def\BibTeX{{\rm B\kern-.05em{\sc i\kern-.025em b}\kern-.08em
    T\kern-.1667em\lower.7ex\hbox{E}\kern-.125emX}}

\newcommand{\narrowbmatrix}[1]{%
  {\setlength{\arraycolsep}{2pt}\begin{bmatrix}#1\end{bmatrix}}}
\newcommand{\DOI}[1]{\href{https://doi.org/#1}{DOI:\,\nolinkurl{#1}}}

\begin{document}


\title{Quaternionic Pole Placement via Companion Forms and the Ackermann Formula}
\author{Michael Sebek, \IEEEmembership{Life Senior Member, IEEE}
\thanks{Received September 2025;
This work was co-funded by the European Union under the project ROBOPROX (reg. no. 
CZ.02.01.01/00/22\_008/0004590);

The author thanks the Associate Editor and the anonymous reviewers for their careful reading and constructive comments.}
\thanks{M. Sebek is with the Czech Technical University in Prague, Faculty of Electrical Engineering, Department of Control Engineering, Prague, 16000, Czech Republic (e-mail: michael.sebek@fel.cvut.cz). }}
\maketitle
\begin{center}
\fbox{\begin{minipage}{0.94\linewidth}
\footnotesize
\textbf{Accepted manuscript / arXiv version.}
This is the accepted manuscript / author final version of: M. Sebek, "Quaternionic Pole Placement via Companion Forms and the Ackermann Formula," IEEE Transactions on Automatic Control, Early Access, DOI: 10.1109/TAC.2026.3702917. The article has been accepted for publication in a future issue of the journal and is available in IEEE Xplore. This arXiv version is not the IEEE copy-edited version of record; editorial changes may still be introduced before final issue assignment. The published IEEE article is expected to be available under a Creative Commons Attribution (CC BY) Open Access license.
The arXiv record is
\href{https://arxiv.org/abs/2509.21425}{arXiv:2509.21425}.
\end{minipage}}
\end{center}
\vspace{0.4em}
\begin{abstract}
We present an extension of state-feedback pole placement for quaternionic systems, based on companion forms and the Ackermann formula. For controllable single-input quaternionic LTI models, we define a companion polynomial that annihilates its companion matrix, characterize spectra via right-eigenvalue similarity classes, and prove coefficient-matching design in controllable coordinates. We then derive a coordinate-free Ackermann gain expression valid for real target polynomials, and state its scope and limitations. Short examples demonstrate correctness, practical use, and numerical simplicity.
\end{abstract}
\begin{IEEEkeywords}
Quaternions; Companion form; Companion polynomial; Right eigenvalues; Pole placement; Ackermann formula; Quaternionic systems.
\end{IEEEkeywords}
\section{Introduction}
\label{Sec:Introduction}
Quaternions provide a natural extension of real and complex numbers: like $\mathbb{R}$ and $\mathbb{C}$, $\mathbb{H}$ is a division algebra, but—crucially—multiplication is \emph{noncommutative}. This changes some rules in systems and control. Eigenvalues must be interpreted carefully (right vs.\ left) and—on the right—they are defined only up to \emph{quaternionic similarity classes}. Determinants and a full Cayley–Hamilton theorem do not carry over in the usual way, and even polynomial evaluation is order-dependent, so \emph{right and left zeros need not coincide}. As a result, several classical tools are not yet fully developed—or are simply missing—when one works directly over $\mathbb{H}$; see, e.g., foundational results on quaternionic LTI stability \cite{PereiraVettori2006} and linear algebra \cite{Rodman2014}.

Quaternion models are a natural choice whenever states or operators are genuinely hypercomplex. Beyond rigid-body attitude kinematics, representative areas include:
spin-\(\tfrac{1}{2}\) 
quantum dynamics~\cite{WhartonKoch15} and qubit gate control—e.g., flatness-based single-qubit gates~\cite{daSilvaRouchon2008};
NMR/MRI pulse design—e.g., quaternion descriptions of selective-pulse propagators~\cite{Emsley1992};
polarization/bivariate signal processing and colour imaging—e.g., Quaternion-MUSIC for DOA–polarization~\cite{Miron2006}, quaternion Fourier transforms for colour images~\cite{Sangwine1996}, and a recent overview~\cite{SPMag2023};
and learning and computation with hypercomplex variables—e.g., quaternion neural controllers~\cite{Takahashi2017}, quaternion CNNs for speech~\cite{Parcollet2018}, and foundational links to quantum computing~\cite{Meglicki2008}.

Other interpretations of quaternions are already used in control of special classes of systems—for example, unit quaternions as rotation parameters, complex $2{\times}2$ representations, or real four-vector embeddings. In this note, we adopt the \emph{hypercomplex algebraic} viewpoint, where signals, states, and operators naturally reside in $\mathbb{H}$. This keeps the calculus aligned with the modeling (no repeated conversions to $\mathbb{R}$ or $\mathbb{C}$), makes spectral notions explicit (right eigenvalues and their similarity classes), and meshes naturally with companion constructions used in design.

The primary contribution of this work is to lift standard real/complex state-space methods to quaternionic systems. We tackle a basic, practical task: \emph{state-feedback pole placement} for single-input, controllable LTI models over $\HH$. The results rest on two pillars. First, we develop quaternionic \emph{companion forms} together with a quaternionic \emph{companion polynomial} that right-annihilates its companion matrix—playing the role of characteristic/minimal polynomials over fields—and thereby provide a clean, determinant-free framework in controllable (companion) coordinates. Second, we give two equivalent design routes: (i) \emph{coefficient matching} in the controllable companion form, and (ii) a \emph{coordinate-free Ackermann-type} expression based only on the controllability matrix and a desired companion polynomial. We also specify the exact scope: the Ackermann route applies when the target polynomial has \emph{real (central) coefficients}, which suffices to prescribe any combination of real classes and complex-conjugate classes of right eigenvalues.

These workflows mirror their $\RR/\CC$ counterparts---companion coordinates, coefficient matching, and an Ackermann-type gain---yet key quaternionic differences remain: right poles are \emph{similarity classes}, polynomial evaluation is order-sensitive, and the Ackermann step requires \emph{real (central)} target coefficients. We retain the familiar design flow while making these nuances explicit.

The note is organized as follows. Sections~II–IV develop the quaternionic preliminaries, LTI model, and companion-form framework. Sections~V–VII treat pole placement, coefficient matching, and the Ackermann formula, respectively; Section~VIII concludes.
\section{Quaternions and Matrices}
\paragraph*{Quaternions} A quaternion is a \emph{hypercomplex} scalar $q=a+b\i+c\j+d\k$
 with $a,b,c,d\in\RR$, and imaginary units $\i,\j,\k$ satisfying 
$\i^{2}=\j^{2}=\k^{2}=\i\j\k=-1$ (hence $\i\j=\k=-\j\i,$ $\j\k=\i=-\k\j,$ $\k\i=\j=-\i\k$). The set of all quaternions, denoted $\HH,$
is a 4-dimensional associative but \emph{noncommutative} division algebra over $\RR.$ For $q\in\HH,$ denote its real and imaginary parts by $\operatorname{Re}(q)=a$ and $\operatorname{Im}(q)=b\i+c\j+d\k$, respectively; the conjugate is $\conj{q}=a-b\i-c\j-d\k$ and the norm is $|q|=\sqrt{q\bar q}.$ Throughout we use the scalar (hypercomplex) model of \(\mathbb{H}\); see \cite{Rodman2014} for a more detailed introduction.

\paragraph*{Similarity and classes}
Two quaternions $q,r$ are \emph{similar}, written $q\sim r,$ if $q=\alpha^{-1} r\alpha$  for some nonzero $\alpha\in\HH.$
Similarity preserves $\operatorname{Re}(\cdot),$  $\left|\cdot \right|,$ and $\left|\operatorname{Im}(\cdot)\right| $ and partitions $\HH$ into  equivalence classes $\left[\cdot\right]$: real quaternions are isolated classes $\left[a\right]=\{a\}$; if $\operatorname{Im}(q)\neq0,$ then $[q]$ 
is a 2-sphere $\{\mathrm{Re}(q)+u\,\left|\operatorname{Im}(q)\right|:u\in\mathrm{Im}(\HH)\cong\RR^3,\ |u|=1\}$ that meets the standard complex line $\{a+b\i:\,a,b\in\mathbb{R}\}$ 
in exactly two points $\operatorname{Re}(q)\pm i\,\left|\operatorname{Im}(q)\right|.$
For example, $[\k]$ contains all imaginary units, including $\pm \i.$

\paragraph*{Quaternionic matrices}
For $m,n\in\mathbb{N}$, let $\HHmn$ denote the set of 
$m\times n$ matrices with entries in $\HH.$ We view column vectors as a right $\HH$-module and use the standard conjugate transpose $A^{*}=\overline{A}^{\T}.$ 
Noncommutativity shows up strongly: entrywise conjugation and transpose \emph{do not} distribute over products in the usual way: in general
$\conjm{AB} \neq \conj{A}\conj{B}$
and $(AB)^\T \neq B^\T A^\T$. By contrast, the mixed identity holds: $(AB)^\ast = B^\ast A^\ast.$ 
Likewise, \((A^{\top})^{-1}=(A^{-1})^{\top}\) \emph{need not} hold over \(\mathbb{H}\) (as transpose may destroy invertibility), whereas \((A^{*})^{-1}=(A^{-1})^{*}\) does hold. See e.g. \cite{Rodman2014} for further subtleties.

\paragraph*{Rank and invertibility}
For $A\in\HHmn$, the quaternionic rank, denoted $\operatorname{rank}_{\mathbb H}A$,
is the maximal number of \emph{right–independent} columns of $A$
(i.e., if $\sum_j a_j c_j = 0$ with coefficients $c_j\in\mathbb H$ on the \emph{right},
then all $c_j=0$). Equivalently, it is the maximal number of \emph{left–independent} rows. A square matrix
$A\in\mathbb H^{n\times n}$ is invertible iff $\operatorname{rank}_{\mathbb H}A=n,$ see \cite{Rodman2014}.

\paragraph*{Right eigenvalues and spectrum}
For $A\in\HHnn$, a \emph{right eigenpair} is $(v,\lambda)$
with $v\neq 0$ and 
\begin{IEEEeqnarray}{rCl}
    A v &=& v\,\lambda,\qquad v\in\mathbb{H}^{n},\ \lambda\in\mathbb{H}.
    \label{Eq0reigenpair}
\end{IEEEeqnarray}
Scaling $v$ on the right by $\alpha \neq 0 \in\HH$
replaces $\lambda$ by $\alpha^{-1}\lambda\alpha.$
So, right eigenvalues are defined up to quaternionic similarity.
Therefore, the right spectrum is the set of similarity classes
\[
\sigma_{\err}(A) := \{[\lambda] : \exists v\neq 0 \text{ such that } A v = v \lambda\} \subset \mathbb{H}/\sim,
\]
which is invariant under matrix similarity: \( A \mapsto S^{-1} A S \) \cite{Rodman2014}.
 When a scalar is needed, we choose the unique complex representative in each class with a nonnegative imaginary part and call it the \emph{standard eigenvalue.}
(Left eigenvalues, defined by $Av=\mu v$, are \emph{not} preserved by matrix similarity and will not be used unless explicitly stated.)
\paragraph*{Polynomials and evaluations}
Let $\HH[\lambda]$ denote the ring of quaternionic polynomials
$p(\lambda)=\sum_{k=0}^{n} p_k \lambda^{k}$ with $p_k\in\HH$, where the indeterminate $\lambda$
commutes with all coefficients.
Evaluation is order--dependent: for $q\in\HH$ and $A\in\HH^{m\times m}$ we define
\[
\begin{aligned}
p(q)_{\err}\!&:=\!\sum_{k=0}^{n}p_k q^{k},\quad
p(q)_{\ell}\!:=\!\sum_{k=0}^{n} q^{k}p_k,\\[-0.2ex]
p(A)_{\err}\!&:=\!\sum_{k=0}^{n} p_k A^{k}.
\end{aligned}
\]
i.e., in the right matrix evaluation, the matrix powers appear on the right of the coefficients.
Analogously one may set $p(A)_{\ell}:=\sum_{k=0}^{n} A^{k}p_k$ if needed.
If $p\in\RR[\lambda]$ (central coefficients), then $p(\cdot)_{\err}=p(\cdot)_{\ell}$ and we write simply $p(\cdot)$;
otherwise we keep the evaluation index explicit.
\paragraph*{Left and right zeros}
A \emph{right zero} of $p$ is $q\in\HH$ such that $p(q)_\err=0;$
a \emph{left zero} satisfies $p(q)_\ell=0$. 
Left and right zeros differ only by similarity.
If $p(q)_\ell=0$, then there is some $\alpha\neq0$
 with $p(\alpha^{-1} q\alpha)_\err=0$, and conversely \cite{GordonMotzkin65}.
 If all $p_k\in\mathbb{R}$ (central coefficients), then 
$p(q)_\err=p(q)_\ell$ for all $q$ and the left zeros equal the right zeros.

\paragraph*{Isolated and spherical right zeros}
Let \(p\in\HH[\lambda]\) be nonzero with \(n=\deg p\). A nonreal right zero is either \emph{isolated}
(a single point of its similarity class) or \emph{spherical} (the whole class \([q]\), hence infinitely many zeros).
The spherical case occurs exactly when the real quadratic
\(\chi_q(\lambda)=\lambda^2-2\Re(q)\lambda+|q|^2\in\RR[\lambda]\) divides \(p(\lambda)\);
thus \(p\) has infinitely many right zeros iff it has a spherical right zero.
If \(p\) has no spherical right zeros, then all right zeros are isolated and their (algebraic) multiplicities sum to \(n=\deg p\).
In general, the same total \(n\) is obtained by counting isolated zeros with (algebraic) multiplicity and counting each spherical class twice (with spherical multiplicity). \cite{PogoruiShapiro2004,Opfer09}
\paragraph*{Notation}
We view \(\HH^{n}\) as a right \(\HH\)-module. Let \(e_i\in\RR^{n}\subset\HH^{n}\) be the \(i\)th canonical basis vector (so \(e_n=[0,\ldots,0,1]^{\top}\)). Since \(e_i\) has real entries, left/right scalar multiplication is unambiguous.
\section{Quaternionic state–space systems}
We study continuous-time LTI state–space models over $\mathbb H$
\begin{IEEEeqnarray*}{rCl}
\dot{x}(t) &=& Ax(t)+Bu(t), \IEEEyesnumber\label{Eq0AB}
\end{IEEEeqnarray*}
and the output equation plays no role in what follows. For clarity, we restrict to the single-input case: $A\in\mathbb H^{n\times n}$, $B\in\mathbb H^{n}$, with state $x(t)\in\mathbb H^{n}$ and input $u(t)\in\mathbb H$. (When desired, an output $y(t)=Cx(t)+Du(t)$ with $C\in\mathbb H^{p\times n}$ and $D\in\HH^{p}$ can be appended without affecting the development.)
\paragraph*{Stability}
Asymptotic (internal) stability is defined exactly as in the real/complex case: with \(u\equiv 0\), every trajectory satisfies \(x(t)\to 0\) for all \(x(0)\in\HH^{n}\).
Equivalently, \eqref{Eq0AB} is asymptotically stable iff every right–eigenvalue class of \(A\) meets \(\CC\) in a point \(\lambda\) with \(\operatorname{Re}\,\lambda<0\); see \cite{PereiraVettori2006}.

\paragraph*{Controllability}
As in the real/complex case, controllability means the ability to steer the state from any \(x_0\in\HHn\) to any \(x_f\in\HHn\) in finite time by a suitable input.
In the single–input setting, the controllability matrix is
\begin{IEEEeqnarray}{rCl}
\mathcal C &=& \bigl[\, B \;\; AB \;\; \cdots \;\; A^{n-1}B \,\bigr] \in \mathbb H^{n\times n},
\label{Eq0CONmat}
\end{IEEEeqnarray}
and the pair $(A,B)$ is completely controllable iff $\mathcal C$ is invertible over $\mathbb H$
(equivalently, iff $\operatorname{rank}_{\mathbb H}\mathcal C = n$) \cite{JiangLiuKouWang20}.

\section{Companion Form and Companion Polynomial}
In classical (real/complex) linear systems, two key tools are the controllable companion form and the characteristic or minimal polynomial. For quaternionic systems, the controllable companion form still applies. However, characteristic and minimal polynomials do not carry over directly, because there is no simple determinant on quaternions. Instead, we can work with a quaternionic companion polynomial.

\begin{Def}[\emph{Controllable companion form and companion polynomial}]
Let $A_c\in\HH^{n\times n}$ and $B_c\in\HH^{n}$ be
\begin{IEEEeqnarray}{rClrCl}
A_c &=& \begin{bmatrix}
   0&1&\ldots&0\\
        \vdots&\vdots&\ddots&\vdots\\
         0&0&\ldots&1\\
        -a_0&-a_1&\ldots&-a_{n-1}\\       
    \end{bmatrix},\;\;
&B_c &=& \begin{bmatrix}
0 \\
\vdots \\
0 \\
1
\end{bmatrix}.
\label{Eq0AcBcI}
\end{IEEEeqnarray}
with $a_0,\ldots,a_{n-1}\in\HH$. We say that $(A_c,B_c)$ is in (lower) \emph{ controllable companion (canonical) form.} The \emph{companion polynomial} associated with $(A_c,B_c)$ (or with $A_c$) is the monic quaternionic polynomial
\begin{IEEEeqnarray}{rCl}
a(\lambda)&=&a_{0}+a_{1}\lambda+\cdots+a_{n-1}\lambda^{n-1}+\lambda^{n}\in\HH[\lambda].
\end{IEEEeqnarray}
\end{Def}
If \(A\in\HHnn\) is \emph{cyclic}—i.e., there exists \(B\in\HHn\) with 
\(\mathcal{C}\) invertible—then \(A\) is similar to the lower controllable companion form. The next theorem makes this precise and gives the construction.
\begin{The}[Determinant-free companion form for a controllable pair]\label{ThCompanion}
Let \(A \in \HH^{n\times n}\) and \(B \in \HH^{n}\) be given, and assume the
controllability matrix \(\mathcal{C}=[B,AB,\dots,A^{n-1}B]\) is invertible over \(\HH\).

\begin{enumerate}
\item Then there exists an invertible similarity \(T\in\HH^{n\times n}\) such that
\begin{equation}\label{Eq0sim}
A_c=T^{-1}AT,\qquad B_c=T^{-1}B,
\end{equation}
and the pair \((A_c,B_c)\) is in lower controllable companion form \eqref{Eq0AcBcI}.
In particular, the coefficients \(a_0,\dots,a_{n-1}\in\HH\) are defined by the last row of \(A_c\),
\[
e_n^\top A_c=\begin{bmatrix}-a_0&-a_1&\cdots&-a_{n-1}\end{bmatrix}.
\]

\item The matrix \(T\) is explicitly given by
\begin{IEEEeqnarray}{rCl}
T^{-1} &=&
\begin{bmatrix}
t\\ tA\\ \vdots\\ tA^{n-1}
\end{bmatrix},
\label{Eq0Tidef}
\end{IEEEeqnarray}
where the first row reads
\begin{IEEEeqnarray}{rCl}
t &=& [0,0,\ldots,1]\,\mathcal{C}^{-1}.
\label{Eq0trowdef}
\end{IEEEeqnarray}

\item For the given pair \((A,B)\), the companion pair \((A_c,B_c)\) and the similarity \(T\) are unique.
\end{enumerate}
\end{The}
\begin{Rem}
Compared with the classical \(\mathbb{R} /\mathbb{C}\) case, the construction is \emph{completely reversed}: rather than deriving the companion realization from a characteristic/minimal polynomial computed via determinants (as usual in the classical \(\mathbb{R}/\mathbb{C}\) setting), we first construct the controllable companion form directly from the controllability matrix and only then \emph{read off} the companion polynomial from its last row. This route is determinant–free and sidesteps characteristic/minimal polynomials, Cayley–Hamilton, and adjugates, which do not transfer cleanly to \(\mathbb{H}\).
\end{Rem}
\begin{proof}
The proof follows the standard companion-form argument \cite{AntsaklisMichel06}, adapted to \(\HH\).

Since \(\mathcal C\) is invertible, \(t=e_n^\top \mathcal C^{-1}\) is well defined, and so is \(T^{-1}\) from \eqref{Eq0Tidef}. For \(i,j=1,\dots,n\),
\[
(T^{-1}\mathcal C)_{ij}=tA^{i+j-2}B.
\]
Because \(t\) is the last row of \(\mathcal C^{-1}\), we have \(tA^kB=0\) for \(k=0,\dots,n-2\) and \(tA^{n-1}B=1\). Hence \(T^{-1}\mathcal C\) is unit anti-triangular, so \(T\) is invertible.

Set \(M:=T^{-1}AT\). First, since the first column of \(\mathcal C\) is \(B\),
\[
T^{-1}B=\begin{bmatrix} tB & tAB & \cdots & tA^{n-1}B \end{bmatrix}^{\!\top}=e_n.
\]

Next, let \(W=[A^{n-1}B,\dots,B]\) and \(Q:=T^{-1}W\). Since both \(T\) and \(W\) are invertible, so is \(Q\). For \(i=1,\dots,n-1\) and \(j=1,\dots,n\),
\[
e_i^\top MQe_j=e_i^\top T^{-1}AWe_j=tA^{n+i-j}B,
\]
while
\[
e_{i+1}^\top Qe_j=e_{i+1}^\top T^{-1}We_j=tA^{n+i-j}B.
\]
Thus \((e_i^\top M-e_{i+1}^\top)Qe_j=0\) for all \(j\), and invertibility of \(Q\) gives \(e_i^\top M=e_{i+1}^\top\) for \(i=1,\dots,n-1\). Therefore the first \(n-1\) rows of \(M\) are exactly the shift rows of the lower controllable companion form.

Writing the last row as \(e_n^\top M=[-a_0,\,-a_1,\,\dots,\,-a_{n-1}]\), we conclude that \(M\) has the form \eqref{Eq0AcBcI}. Hence \(A_c=T^{-1}AT\) and \(B_c=T^{-1}B=e_n\).

For uniqueness, let \(\widetilde T\) be any invertible matrix such that \(\widetilde T^{-1}B=e_n\) and \(\widetilde M:=\widetilde T^{-1}A\widetilde T\) is in lower controllable companion form. Put \(\widetilde t:=e_1^\top \widetilde T^{-1}\). Since \(e_i^\top \widetilde M=e_{i+1}^\top\) for \(i=1,\dots,n-1\), we get \(e_i^\top \widetilde T^{-1}=\widetilde tA^{i-1}\), so \(\widetilde T^{-1}\) has the same form as \eqref{Eq0Tidef}, with \(\widetilde t\) in place of \(t\).

Now
\[
\widetilde t\,\mathcal C
= e_1^\top \widetilde T^{-1}\mathcal C
= \begin{bmatrix}
e_1^\top e_n &
e_1^\top \widetilde M e_n &
\cdots &
e_1^\top \widetilde M^{n-1} e_n
\end{bmatrix}.
\]
Because \(\widetilde M\) has the companion shift rows, \(e_1^\top \widetilde M^k=e_{k+1}^\top\) for \(k=0,\dots,n-1\), hence \(\widetilde t\,\mathcal C=e_n^\top\). Therefore \(\widetilde t=e_n^\top \mathcal C^{-1}=t\), so \(\widetilde T^{-1}=T^{-1}\), \(\widetilde T=T\), and the pair \((A_c,B_c)\) is unique.
\end{proof}
\begin{Ex}\label{ExCompanion}
Let
\begin{IEEEeqnarray}{rCl}\label{Eq0ABEx}
A&=&\begin{bmatrix} 1 & \i\\[2pt] \j & \k \end{bmatrix},\quad
B=\begin{bmatrix} 1\\[2pt] \k \end{bmatrix}.
\end{IEEEeqnarray}
Then the controllability matrix and its inverse are
\begin{IEEEeqnarray}{rCll}
    \mathcal{C}&=&&
    \begin{bmatrix}
    1 & 1-\j \\
     \k & -1+\j
    \end{bmatrix} ,\label{Eq0CON}\\
    \mathcal{C}^{-1} &=& \tfrac14&
    \begin{bmatrix}
    2-2\k  &  2-2\k \\             
    1+\i+\j+\k & -1+\i-\j+\k
    \end{bmatrix}. \label{Eq0CONi} \IEEEyesnumber
\end{IEEEeqnarray}
%
Hence
\begin{IEEEeqnarray*}{rCl}
    t = \tfrac14
    \begin{bmatrix}    
    1+\i+\j+\k & -1+\i-\j+\k
    \end{bmatrix}
\end{IEEEeqnarray*}
and
\begin{IEEEeqnarray}{rCll}
    T^{-1}&=& \tfrac14&
    \begin{bmatrix}
      1+\i+\j+\k & -1+\i-\j+\k\\
     2+2\k & -2-2\k
    \end{bmatrix},
    \label{Eq0ExTi}\\
     T&=&&
    \begin{bmatrix}
     -\i-\k  &   1\\
    -\i-\k   &  \k
    \end{bmatrix}. \label{Eq0ExT}
\end{IEEEeqnarray}
The lower controllable companion form is
{\setlength{\arraycolsep}{2pt}
\begin{IEEEeqnarray}{rCl}\label{Eq0ExAcBc}
A_c &=& 
\begin{bmatrix}
     0    &   1 \\   
       1-\i+\j-\k   &  1+\i-\j+\k
\end{bmatrix}, \,
B_c=\begin{bmatrix} 0\\[2pt] 1 \end{bmatrix},   
\end{IEEEeqnarray}
}
and the associated companion polynomial is
\begin{IEEEeqnarray}{rCl}\label{Eq0Excomppol}
a(\lambda) &=& -1+\i-\j+\k -  (1+\i-\j+\k)\lambda +\lambda^2.
\end{IEEEeqnarray}
\end{Ex}
We can still have a clean annihilating identity for the lower controllable companion even though full Cayley–Hamilton fails over $\HH.$ See \cite{ChapmanMachen16}.

\begin{The}[\emph{Companion polynomial annihilates its companion matrix over $\HH$}]\label{The0Companionannihilates}
Let $A_c\in\HHnn$ be in the companion form \eqref{Eq0AcBcI}.
Then the companion polynomial
\begin{IEEEeqnarray*}{rCl}  
  a(\lambda) &=& a_0 +a_{1}\lambda+\cdots+a_{n-1}\lambda^{n-1}+ \lambda^n \in\HH[\lambda]
\end{IEEEeqnarray*}
satisfies the right matrix evaluation identity
\begin{IEEEeqnarray*}{rCl}
a_0 I + a_1 A_c + \cdots + a_{n-1}A_c^{n-1} + A_c^n=0 .
\end{IEEEeqnarray*}
\end{The}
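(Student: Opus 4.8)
The plan is to prove the matrix identity $P:=a_0 I+a_1A_c+\cdots+a_{n-1}A_c^{n-1}+A_c^n=0$ by showing that every \emph{row} of $P$ vanishes, i.e. $e_i^{\T}P=0$ for $i=1,\dots,n$. I would deliberately avoid the tempting shortcut of reusing the cyclic-vector identity from the preceding theorem, namely $\sum_{k=0}^{n}A_c^{k}e_n\,a_k=0$ obtained from \eqref{eq:step3}: that identity carries the coefficients $a_k$ on the \emph{right} and is applied to the single column $e_n$, whereas the statement to be proved carries the $a_k$ on the \emph{left} and must hold as a full matrix. Over $\HH$ these are genuinely different (for $n=2$ the right-coefficient matrix $\sum_kA_c^k a_k$ has $(2,1)$-entry $a_1a_0-a_0a_1\neq0$). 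The resolution is to mirror the shift-register argument, but acting on the \emph{left} (on rows) rather than on the cyclic column $B_c=e_n$, so that the coefficients land on the correct side.

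First I would record the row shift structure of the companion matrix. Because the top $n-1$ rows of $A_c$ in \eqref{Eq0AcBcI} are pure shifts, one checks directly that $e_1^{\T}A_c^{k}=e_{k+1}^{\T}$ for $k=0,\dots,n-1$, and hence $e_i^{\T}=e_1^{\T}A_c^{\,i-1}$ for every $i=1,\dots,n$. Reading off the last row of $A_c$ then gives the base recurrence
\[
f_n:=e_1^{\T}A_c^{n}=e_n^{\T}A_c=-\textstyle\sum_{k=0}^{n-1}a_k\,e_{k+1}^{\T}=-\sum_{k=0}^{n-1}a_k f_k,
\]
where $f_m:=e_1^{\T}A_c^{m}$. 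Here placing each $a_k$ on the left is legitimate precisely because every $e_{k+1}^{\T}$ is a \emph{real} (central) unit row, so its single nonzero entry $1$ absorbs $a_k$ identically from either side.

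Next I would propagate this base relation to all shifts. Right-multiplying $f_n=-\sum_k a_k f_k$ by $A_c^{m}$ and using that a left scalar factors out of a row-vector--matrix product, i.e. $(a_k f_k)A_c^{m}=a_k\,(f_k A_c^{m})=a_k f_{k+m}$, yields
\[
f_{m+n}=-\textstyle\sum_{k=0}^{n-1}a_k f_{m+k}\qquad\text{for all }m\ge 0 .
\]
Finally, using $e_i^{\T}(a_k A_c^{k})=a_k(e_i^{\T}A_c^{k})=a_k f_{(i-1)+k}$, each row of $P$ becomes $e_i^{\T}P=\sum_{k=0}^{n}a_k f_{(i-1)+k}$ (with $a_n=1$), which is exactly the preceding recurrence evaluated at $m=i-1$ and therefore vanishes. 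Hence $e_i^{\T}P=0$ for all $i$, so $P=0$. The only real obstacle is noncommutativity bookkeeping: at each step one must verify that the scalar coefficients remain strictly on the left as they pass through the products, and it is precisely this requirement that forces the row (left-action) formulation rather than the column (Krylov) one used for \eqref{eq:step3}.
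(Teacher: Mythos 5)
Your proof is correct, and it is a genuine mirror image of the paper's argument rather than a reproduction of it. The paper works with \emph{columns}: it forms the Krylov vectors $v_k=A_c^{\,k}e_n$, establishes the wrap-around relation $v_n=-\sum_{k=0}^{n-1}a_k v_k$ (left coefficients), propagates it to $v_{n+j}=-\sum_k a_k v_{k+j}$, deduces $a(A_c)\mathcal{K}=0$ for the Krylov matrix $\mathcal{K}=[v_0,\dots,v_{n-1}]$, and finally invokes invertibility of $\mathcal{K}$. You instead work with \emph{rows} $f_m=e_1^{\T}A_c^{\,m}$ and show each row of $a(A_c)$ vanishes outright. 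Your route buys two things. First, it needs no invertibility step: no Krylov matrix appears at all. Second, and more substantively, your propagation is watertight over $\HH$, since a left scalar passes through a row-vector--matrix product by associativity alone, $(a_k f_k)A_c^{\,m}=a_k(f_k A_c^{\,m})$; by contrast, the paper's propagation step, written as $v_{n+j}=A_c^{\,j}v_n=-\sum_k a_k\,A_c^{\,j}v_k$, as literally stated moves $A_c^{\,j}$ past the left quaternionic scalars $a_k$, which is not a valid general manipulation over $\HH$ (precisely the pitfall the paper itself flags around Lemma~\ref{Lem0swap}); the resulting identity is nevertheless true, but justifying it in the column picture requires an additional entrywise shift (Hankel-type) argument that the paper does not spell out, whereas in your left-module formulation the issue never arises. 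Your preliminary observation is also correct and well taken: the right-coefficient identity \eqref{eq:step3} from the companion-form theorem cannot simply be recycled, as your $n=2$ computation shows (the right-coefficient evaluation $\sum_k A_c^{\,k}a_k$ has $(2,1)$-entry $a_1a_0-a_0a_1\neq 0$ in general), which is exactly why the theorem is stated with coefficients on the left.
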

\begin{proof}
Define the row sequence
\begin{IEEEeqnarray*}{rCl}
r_k&:=&e_1^\top A_c^k,\qquad k\ge 0.
\end{IEEEeqnarray*}
Then
\begin{IEEEeqnarray*}{rCl}
r_0=e_1^\top,\; r_1=e_2^\top,\; \ldots,\; r_{n-1}=e_n^\top,
\end{IEEEeqnarray*}
so \(r_0,\dots,r_{n-1}\) are exactly the rows of the identity matrix and in particular form a basis of \(\HH^{1\times n}\).

By the shape of the companion matrix,
\begin{IEEEeqnarray*}{rCl}
r_n&=&e_n^\top A_c=\big[-a_0,\dots,-a_{n-1}\big]
    =-\sum_{k=0}^{n-1} a_k\,r_k.
\end{IEEEeqnarray*}
Since \(r_{k+1}=r_kA_c\), it follows that for every \(j\ge 0\),
\begin{IEEEeqnarray}{rCl}
r_{n+j}
&=&r_nA_c^j
=-\sum_{k=0}^{n-1} a_k\,r_kA_c^j
=-\sum_{k=0}^{n-1} a_k\,r_{k+j}.
\label{Eq0rowrec}
\end{IEEEeqnarray}

Now evaluate \(a(\lambda)\) at \(A_c\) using the right matrix evaluation:
\begin{IEEEeqnarray*}{rCl}
a(A_c)_{\err}&=&a_0I+a_1A_c+\cdots+a_{n-1}A_c^{n-1}+A_c^n.
\end{IEEEeqnarray*}
For each \(j=0,\dots,n-1\),
\begin{IEEEeqnarray*}{rCl}
r_j a(A_c)_{\err}
&=&a_0r_j+a_1r_{j+1}+\cdots+a_{n-1}r_{j+n-1}+r_{j+n}\\
&=&0
\end{IEEEeqnarray*}
by \eqref{Eq0rowrec}. Hence every row \(r_j\), \(j=0,\dots,n-1\), annihilates \(a(A_c)_{\err}\).
Since \(r_0,\dots,r_{n-1}\) are the rows of the identity matrix, this implies \(a(A_c)_{\err}=0\).
\end{proof}
\begin{Rem}
This \emph{is not} the (generally false) \emph{quaternionic Cayley–Hamilton theorem;} it is a one-sided polynomial identity tied to the companion’s shift structure.
\end{Rem}
The right spectrum of the companion matrix is the set of similarity classes determined by the right zeros of the companion polynomial.
\begin{The}
[\emph{Right zeros form the right spectrum of $A_c$}]\label{ThAcBcroots}
Let $(A_c,B_c)$ be in the quaternionic companion form \eqref{Eq0AcBcI} and
\begin{IEEEeqnarray*}{rCl}
a(\lambda) &=& a_0+a_1\lambda+\cdots+a_{n-1}\lambda^{n-1}+\lambda^n
\in\HH[\lambda]
\end{IEEEeqnarray*}
be the related quaternionic companion polynomial. Then
\begin{enumerate}[leftmargin=*,noitemsep]
\item If $\lambda\in\HH$ is a \emph{right zero} of $a(\lambda),$ then $\lambda$ is also a \emph{right eigenvalue} of $A_c$ and $v=[1,\lambda,\ldots,\lambda^{n-1}]^\T$ satisfies $A_c v=v\lambda$. 
\item Consequently,
\[
\sigma_\err(A_c)=\{[\lambda]:a(\lambda)_{\err}=0\}.
\]
Equivalently, if \(\lambda_1,\dots,\lambda_m\) are the distinct right zeros of \(a\) modulo similarity, then
\[
\sigma_\err(A_c)=\{[\lambda_1],\dots,[\lambda_m]\}.
\]
\item Every \emph{left zero} $\mu$ of $a(\lambda)$ (under left evaluation) is a \emph{right eigenvalue} of $A_c$ as well. 
\end{enumerate}
\end{The}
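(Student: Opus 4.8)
The plan is to handle the three items in sequence, with item~(1) a direct computation that then drives (2) and (3). For (1) I would plug the candidate $v=[1,\lambda,\dots,\lambda^{n-1}]^{\T}$ into the eigenvector equation and read $A_c v$ row by row. The first $n-1$ rows of $A_c$ are the shift rows $e_{i+1}^{\T}$, so row $i$ of $A_c v$ returns the $(i+1)$st entry $\lambda^{i}$ of $v$, which equals $(v\lambda)_i=\lambda^{i-1}\lambda$; hence the top $n-1$ coordinates of $A_c v=v\lambda$ hold identically for \emph{every} $\lambda$. All the content sits in the last row: $(A_c v)_n=-\sum_{k=0}^{n-1}a_k\lambda^{k}$ while $(v\lambda)_n=\lambda^{n}$, so $A_c v=v\lambda$ is equivalent to $\sum_{k=0}^{n-1}a_k\lambda^{k}+\lambda^{n}=0$, i.e.\ to $a(\lambda)_{\err}=0$. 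Thus a right zero of $a$ produces the eigenpair $(v,\lambda)$. The only delicate point is the noncommutative bookkeeping in the last row: the coefficients $a_k$ multiply $\lambda^{k}$ on the \emph{left}, matching the right-evaluation convention, so the identification with $a(\lambda)_{\err}$ is genuine and not an artifact of ordering.

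For (2) the inclusion $\sigma_{\err}(A_c)\supseteq\bigcup_i[\lambda_i]$ is immediate from (1) together with the already-noted fact that right-scaling an eigenvector sends the eigenvalue to a similar one, so every class $[\lambda_i]$ of a right zero lies in the right spectrum. The reverse inclusion is the substantive step and the main obstacle. Starting from an arbitrary right eigenpair $A_c w=w\mu$, the shift rows force $w_{k+1}=w_1\mu^{k}$ for $k=0,\dots,n-1$, so $w_1\neq 0$ (otherwise $w=0$). I would then right-normalize, setting $\tilde w:=w\,w_1^{-1}$; its first entry is $1$ and its entries are $\tilde w_{k+1}=w_1\mu^{k}w_1^{-1}=\tilde\mu^{k}$ with $\tilde\mu:=w_1\mu w_1^{-1}\sim\mu$, whence $\tilde w=[1,\tilde\mu,\dots,\tilde\mu^{n-1}]^{\T}$ and $A_c\tilde w=\tilde w\tilde\mu$. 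Reading the last row exactly as in (1) yields $a(\tilde\mu)_{\err}=0$, so $\tilde\mu$ is a right zero and $[\mu]=[\tilde\mu]$ is the class of a right zero. Combining the two inclusions gives the claimed equality of sets of right-eigenvalue classes.

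Finally, (3) I would reduce to (1) through the left/right zero dictionary recalled earlier. If $\mu$ is a left zero of $a$, the Gordon--Motzkin correspondence supplies a nonzero $\alpha$ with $a(\alpha^{-1}\mu\alpha)_{\err}=0$; by (1) the similar quaternion $\mu':=\alpha^{-1}\mu\alpha$ is a right eigenvalue of $A_c$, so $[\mu']=[\mu]\in\sigma_{\err}(A_c)$, and because the right spectrum is closed under similarity, $\mu$ itself is a right eigenvalue. I expect the normalization argument in (2) to be the only place where noncommutativity must be managed carefully; everything else is just the shift-register structure of $A_c$ feeding the right-evaluation convention established in (1).
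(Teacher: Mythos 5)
Your proof is correct and follows essentially the same route as the paper: item (1) by the row-by-row computation on $v=[1,\lambda,\dots,\lambda^{n-1}]^{\T}$, item (2)'s converse via the conjugation $\mu\mapsto w_1\mu w_1^{-1}$ (you normalize the eigenvector to $w\,w_1^{-1}$ before reading the last row, while the paper right-multiplies the last-row identity by $x_0^{-1}$ --- the identical step in different packaging), and item (3) by the Gordon--Motzkin left/right-zero correspondence plus similarity-invariance of $\sigma_{\err}$. No gaps.
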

\begin{proof}
\noindent\emph{Step~1:} Put \(v=\left[1,\lambda,\ldots,\lambda^{n-1}\right]^\T\) and assume \(a(\lambda)_{\err}=0\). By construction,
\begin{IEEEeqnarray*}{rCl}
(A_c v)_1 &=& \lambda,\ \ldots,\ (A_c v)_{n-1}=\lambda^{n-1},\\
(A_c v)_n&=&-(a_0+a_1\lambda+\cdots+a_{n-1}\lambda^{n-1})=\lambda^n.
\end{IEEEeqnarray*}
Thus \(A_c v=v\lambda\).\\
\noindent\emph{Step 2:} By Step~1, every right zero \(\lambda_i\) yields a right eigenpair
\(A_c v=v\lambda_i\). Hence the class \([\lambda_i]\) belongs to the right spectrum.
Indeed, if \(A_cx=x\lambda_i\), then for every \(q\in\HH^\times\),
\begin{IEEEeqnarray*}{rCl}
A_c(xq)&=&(A_cx)q=x\lambda_i q=(xq)(q^{-1}\lambda_i q),
\end{IEEEeqnarray*}
so every element of \([\lambda_i]\) is a right eigenvalue of \(A_c\).

For the converse, take a right eigenpair \(A_cx=x\mu\), \(x\neq 0\), and write
\(x=[x_1,\dots,x_n]^\T\). From the first \(n-1\) rows,
\begin{IEEEeqnarray*}{rCl}
x_{k+1}&=&x_k\mu,\qquad k=1,\dots,n-1,
\end{IEEEeqnarray*}
hence
\begin{IEEEeqnarray*}{rCl}
x_k&=&x_1\mu^{k-1},\qquad k=1,\dots,n.
\end{IEEEeqnarray*}
Moreover, \(x_1\neq 0\), since otherwise the recursion would give \(x_2=\cdots=x_n=0\), contradicting \(x\neq 0\).

Using the last row of \(A_cx=x\mu\) and the above relations, we obtain
\begin{IEEEeqnarray*}{rCl}
-(a_0x_1+a_1x_1\mu+\cdots+a_{n-1}x_1\mu^{n-1})&=&x_1\mu^n.
\end{IEEEeqnarray*}
Right-multiplying by \(x_1^{-1}\) gives
\begin{IEEEeqnarray*}{rCl}
-a_0-a_1(x_1\mu x_1^{-1})-\cdots-a_{n-1}(x_1\mu^{n-1}x_1^{-1})&=&x_1\mu^n x_1^{-1}.
\end{IEEEeqnarray*}
Let \(\lambda:=x_1\mu x_1^{-1}\). Then \(x_1\mu^k x_1^{-1}=\lambda^k\) for all \(k\), so the equality becomes
\begin{IEEEeqnarray*}{C}
-a_0-a_1\lambda-\cdots-a_{n-1}\lambda^{n-1}=\lambda^n
\;\iff\;
a(\lambda)_{\err}=0.
\end{IEEEeqnarray*}
Thus every right eigenvalue \(\mu\) is similar to a right zero \(\lambda\) of \(a(\lambda)\), so the class \([\mu]\) equals \([\lambda]\). Therefore every element of \(\sigma_\err(A_c)\) is of the form \([\lambda]\) with \(a(\lambda)_{\err}=0\). Together with the first inclusion, this yields
\begin{IEEEeqnarray*}{rCl}
\sigma_\err(A_c)&=&\{[\lambda]:a(\lambda)_{\err}=0\}.
\end{IEEEeqnarray*}
\noindent\emph{Step 3:} The left zeros of \(a(\lambda)\) are known to be similar to its right zeros \cite{GordonMotzkin65}. By Step~2, the right spectrum of \(A_c\) contains the entire similarity class of every right zero. Hence every left zero of \(a(\lambda)\) is a right eigenvalue of \(A_c\) as well.
\end{proof}

The companion polynomial can actually be associated with any cyclic quaternionic matrix $A$ using its controllable companion form $A_c.$

\begin{Def}[\emph{Companion polynomial of a controllable pair}]
Let $(A,B)$ with $A\in\HH^{n\times n}$ and $B\in\HH^{n}$ satisfy that the
controllability matrix $\mathcal C=[B,AB,\dots,A^{n-1}B]$ is invertible over $\HH$.
Let $(A_c,B_c)$ be the lower controllable companion form of $(A,B)$; this form is unique. The \emph{companion polynomial}
of $(A,B)$ is the monic quaternionic polynomial
\begin{IEEEeqnarray}{rCl}
a(\lambda)&=&a_{0}+a_{1}\lambda+\cdots+a_{n-1}\lambda^{n-1}+\lambda^{n},
\label{Eq0comppolgeneral}
\end{IEEEeqnarray}
where $\left[-a_0,\dots,-a_{n-1}\right]$ is the last row of $A_c$.  
\end{Def}
\begin{Rem}
Throughout this note, we adopt the \emph{lower} controllable companion form.
Other orientations are related by permutation similarity, and therefore yield the same right spectrum. In this paper we fix the lower orientation, which uniquely determines the associated companion polynomial for a controllable pair.
\end{Rem}
\begin{The}[\emph{Right zeros form the right spectrum of $A$}] \label{ThABroots}
Let $(A,B)$ be a controllable quaternionic pair and $a(\lambda)$ 
be its quaternionic companion polynomial. Then
\begin{enumerate}[leftmargin=*,noitemsep]
\item If $\lambda\in\HH$ is a right zero of $a(\lambda),$
then $\lambda$ is also a right eigenvalue of $A.$ 
\item Consequently,
\[
\sigma_\err(A)=\{[\lambda]:a(\lambda)_{\err}=0\}.
\]
\item Every left zero $\mu$ of $a$ (under left evaluation) is a right eigenvalue of $A$.
\end{enumerate}
\end{The}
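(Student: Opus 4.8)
The plan is to reduce every assertion to the companion case already settled in Theorem~\ref{ThAcBcroots} and transport it through the similarity available for a controllable pair. Since $\mathcal C$ is invertible, there is an invertible $T\in\HHnn$ with $A_c=T^{-1}AT$ and $B_c=T^{-1}B$ as in \eqref{Eq0AcBcfromAB}, and, by the definition of the companion polynomial of a controllable pair, the polynomial $a(\lambda)$ attached to $(A,B)$ is \emph{identical} to the companion polynomial of $(A_c,B_c)$. Consequently the right zeros and the left zeros of $a$ are the same objects in both settings, and only the eigen-information has to be moved between $A_c$ and $A$.

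For part~(1), I would begin from Theorem~\ref{ThAcBcroots}(1): a right zero $\lambda$ of $a$ yields the right eigenpair $A_c v=v\lambda$ with $v=[1,\lambda,\dots,\lambda^{n-1}]^\T$. Setting $w:=Tv$ and left-multiplying $A_c=T^{-1}AT$ by $T$ gives $Aw=w\lambda$; since $T$ is invertible and $v\neq0$ (its first entry is $1$), we have $w\neq0$, so $(w,\lambda)$ is a genuine right eigenpair of $A$. The one point worth stressing is that left-multiplication by $T$ transforms the eigenvector but leaves the right eigenvalue $\lambda$ \emph{exactly} unchanged (not merely up to its similarity class), because $\lambda$ sits on the right of $v$ and is untouched by the multiplication.

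For part~(2), I would invoke the stated similarity-invariance of the right spectrum together with Theorem~\ref{ThAcBcroots}(2), namely $\sigma_\err(A)=\sigma_\err(T^{-1}AT)=\sigma_\err(A_c)=\bigcup_i[\lambda_i]$, where $\{\lambda_i\}$ is the multiset of right zeros of $a$. This single chain delivers both inclusions at once, so the converse direction (every right eigenvalue of $A$ is similar to a right zero of $a$) needs no separate argument. For part~(3), since the left zeros of $a$ are similar to its right zeros by~\cite{GordonMotzkin65}, each left zero $\mu$ lies in some class $[\lambda_i]\subset\sigma_\err(A)$ obtained in part~(2), hence is a right eigenvalue of $A$.

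I do not expect a genuine obstacle: the whole proof is a one-step transport of the companion-case results through the similarity of \eqref{Eq0AcBcfromAB}. The only matter requiring care is the elementary but essential bookkeeping that $A\mapsto T^{-1}AT$ fixes the right eigenvalue $\lambda$ while scaling the eigenvector by $T$, which is precisely the property that makes the right spectrum—a set of similarity classes—a similarity invariant.
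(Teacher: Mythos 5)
Your proposal is correct and follows essentially the same route as the paper, which likewise disposes of all three claims in one line by noting that $A$ and $A_c$ are similar, hence share the right spectrum, and then invoking Theorem~\ref{ThAcBcroots}. Your added detail---that $A(Tv)=TA_cv=(Tv)\lambda$ transports the eigenpair with the right eigenvalue $\lambda$ \emph{exactly} fixed, not merely up to similarity---is a sound and worthwhile elaboration of the similarity-invariance the paper takes for granted.
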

\begin{proof} Since \(A\) is similar to \(A_c\), they share the right spectrum; the claims follow  from Theorem \ref{ThAcBcroots}.
\end{proof}
\begin{Rem}
Hence, the companion polynomial shares some properties of the minimal/characteristic polynomial even in quaternions, but not all! In particular, it \emph{does not} annihilate $A$ in general; it only annihilates its companion form $A_c$. Hence \(a(A_c)_{\err}=0\) but, in general, \(a(A)_{\err}\neq 0\).
\end{Rem}
\begin{Rem}
For a fixed controllable pair \((A,B)\), the associated lower companion realization is unique, and therefore so is its monic companion polynomial. When \(A\) is cyclic, one may also speak informally of a companion polynomial of \(A\), but this polynomial is not intrinsic to \(A\) alone: different cyclic choices of \(B\) generally produce different companion polynomials. What is invariant is the set of similarity classes determined by their right zeros, namely the right spectrum \(\sigma_\err(A)\).
\end{Rem}

\section{Control Problem: State-Feedback Pole Placement in Quaternionic Systems}
We consider single-input LTI quaternionic state–space models \eqref{Eq0AB} with
\begin{IEEEeqnarray}{rCl}
\dot{x}(t)&=&A x(t)+B u(t), 
\label{eq:ss-model}
\end{IEEEeqnarray}
with $A\in\HHnn$ and $B\in\HHn$.  
We use static state feedback
\begin{IEEEeqnarray*}{rCl}
u(t)&=&-K x(t), \qquad K\in\mathbb{H}^{1\times n},
\end{IEEEeqnarray*}
which yields the closed-loop dynamics
\begin{IEEEeqnarray*}{rCl}
\dot{x}(t)=A_{\mathrm{cl}} x(t), 
\qquad A_{\mathrm{cl}}\coloneqq A-BK.
\end{IEEEeqnarray*}

In the real or complex case, the method assigns $n$ (not necessarily distinct) eigenvalues of $A_{\mathrm{cl}}$. In the quaternionic setting, the situation is different: spectra must be interpreted through \emph{right eigenvalues} \eqref{Eq0reigenpair} which are determined only up to quaternionic similarity (i.e., $\lambda \sim \alpha^{-1}\lambda\,\alpha$ for any $\alpha\in\mathbb{H}\setminus\{0\}$). 
Consequently, pole placement specifies the \emph{closed-loop right spectrum} as the set of such similarity classes. Equivalently, one may specify canonical complex representatives
\begin{IEEEeqnarray*}{c}
\lambda_k\in\CC,\qquad \Im \lambda_k\ge 0,\qquad k=1,\dots,n,
\end{IEEEeqnarray*}
counted with algebraic multiplicity.

\begin{Problem}[\emph{Pole placement over $\HH$}]
Given $A\in\mathbb H^{n\times n}$, $B\in\mathbb H^{n}$, and desired representatives
\(\lambda_1,\dots,\lambda_n\in\mathbb H\) (not necessarily distinct), or equivalently their canonical complex representatives,
find $K\in\mathbb H^{1\times n}$ such that, for the closed loop $A_{\rm cl}=A-BK$,
\begin{IEEEeqnarray*}{rCl}
\sigma_\err(A_{\mathrm{cl}})&=&\{[\mu_1],\dots,[\mu_m]\},
\end{IEEEeqnarray*}
where \(\mu_1,\dots,\mu_m\) are representatives of the distinct desired similarity classes appearing among the targets. Any desired multiplicities are encoded separately by the chosen polynomial data.
\end{Problem}
Typically, one chooses a monic closed-loop companion polynomial \(a_d(\lambda)\) whose right zeros represent the desired similarity classes. Thus \(a_d\) is a convenient algebraic encoding of the pole-placement target, while the actual control objective is the closed-loop right spectrum. When \(a_d\in\mathbb{R}[\lambda]\), nonreal zeros appear in conjugate pairs and encode one spherical class each. For continuous-time systems, one usually picks canonical complex representatives in the open left half-plane; \(\operatorname{Re}\) sets the decay rate and \(|\operatorname{Im}|\) the oscillation frequency.
\section{Quaternionic Pole Placement via Controllable Companion Form}

In the real/complex setting, state feedback is obtained by transforming a controllable pair to companion form and matching coefficients. Over \(\mathbb{H}\), the same workflow applies for single-input controllable systems: use the \emph{companion polynomial}—read from the last row of the companion matrix and computed over \(\mathbb{H}\)—and interpret poles as \emph{right-eigenvalue similarity classes}. Pole placement then reduces to a coefficient update in controllable-companion coordinates, followed by mapping the gain back through the similarity transformation. Crucially, this construction avoids determinants and characteristic/minimal polynomials—as well as Cayley–Hamilton and adjugates—which do not carry over cleanly to \(\mathbb{H}\).
\begin{The}[\emph{Pole placement via controllable companion form}]
Assume $(A,B)$ with $A\in\HH^{n\times n}$, $B\in\HH^{n}$ is controllable, and let
$(A_c,B_c)$ be its (lower) controllable companion form via $T$ so that
$A_c=T^{-1}AT$, $B_c=T^{-1}B$.
Write the last row of $A_c$ as $\left[-a_0,\dots,-a_{n-1}\right]$ and let
$a(\lambda)=a_0+a_1\lambda+\cdots+a_{n-1}\lambda^{n-1}+\lambda^n$ be the companion polynomial.
Given any monic quaternionic polynomial
\begin{IEEEeqnarray}{rCl}
  a_d(\lambda)&=&d_{0}+d_{1}\lambda+\cdots+d_{n-1}\lambda^{n-1}+\lambda^{n}
  \in\HH[\lambda],
\end{IEEEeqnarray}
there exists a unique feedback $u=-Kx$, $K\in\HH^{1\times n}$, such that the closed-loop companion polynomial of \(A_{\mathrm{cl},c}:=A_c-B_cK_c\), with \(K_c:=KT\), equals \(a_d\). Consequently, the right spectrum of \(A_{\mathrm{cl}}:=A-BK\) is the set of similarity classes determined by the right zeros of \(a_d\). The multiplicities are encoded by the polynomial \(a_d\) itself, not by the set-valued spectrum. Moreover
\begin{IEEEeqnarray}{rCl}
K &=& K_c\,T^{-1},\\
K_c &=& \big[d_0-a_0\ \ d_1-a_1\ \ \cdots\ \ d_{n-1}-a_{n-1}\big].
\end{IEEEeqnarray}
\end{The}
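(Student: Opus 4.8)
The plan is to push the entire problem into controller coordinates, solve it there by a one-line coefficient update, and then transport the gain back through $T$. The enabling observation is that static state feedback preserves controllability over $\HH$: an easy induction gives $A_{\mathrm{cl}}^{k}B = A^{k}B + \sum_{j<k} A^{j}B\,c_j$ for suitable $c_j\in\HH$ sitting \emph{on the right} (each step only adds lower-order Krylov terms), so the controllability matrix of $(A_{\mathrm{cl}},B)$ equals $\mathcal C$ times an upper unitriangular quaternionic matrix and is therefore invertible. Thus $(A_{\mathrm{cl}},B)$ is again controllable and carries a well-defined companion polynomial, and the design goal becomes: choose $K$ so that this closed-loop companion polynomial equals the prescribed $a_d$.

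First I would work with $(A_c,B_c)=(T^{-1}AT,\,T^{-1}B)$ and set $K_c:=KT$. Because the similarity acts by conjugation on $A$ and sends $B$ to $B_c$, one has $T^{-1}(A-BK)T = A_c - B_c\,(KT) = A_c - B_c K_c$, so $A_{\mathrm{cl}}$ is similar to $A_c-B_cK_c$ and the two share their right spectrum. The key structural fact is that $B_c=e_n$ is real (central), so $B_cK_c=e_nK_c$ is the matrix whose only nonzero row is the last one, equal to $K_c$; subtracting it leaves the shift rows of $A_c$ untouched and replaces the last row $[-a_0,\dots,-a_{n-1}]$ by $[-a_0-k_0,\dots,-a_{n-1}-k_{n-1}]$ when $K_c=[k_0,\dots,k_{n-1}]$. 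Hence $A_c-B_cK_c$ is again in lower controllable companion form, with companion polynomial $\sum_i (a_i+k_i)\lambda^{i}+\lambda^{n}$.

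Matching this to $a_d(\lambda)=\sum_i d_i\lambda^i+\lambda^n$ forces $a_i+k_i=d_i$, i.e. $k_i=d_i-a_i$, which is exactly the claimed $K_c$; undoing $K_c=KT$ gives $K=K_cT^{-1}$. With the closed-loop companion polynomial now equal to $a_d$, Theorem~\ref{ThAcBcroots} applied to $A_c-B_cK_c$ identifies $\sigma_\err(A_c-B_cK_c)$, and hence by similarity $\sigma_\err(A_{\mathrm{cl}})$, with the union of the similarity classes of the right zeros of $a_d$, counted with multiplicity. Uniqueness falls out of the same computation: the assignment $K_c\mapsto$ (closed-loop companion polynomial) is the affine bijection $k_i\mapsto a_i+k_i$ on coefficient strings, so $K_c$ is the \emph{only} row realizing $a_d$, and $K=K_cT^{-1}$ is then the unique gain since $T$ is invertible.

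The part demanding the most care is the noncommutative bookkeeping: one must keep the feedback entries on the correct side throughout (the scalars $k_i$ and the rank-one update $e_nK_c$ are placed so that only the last row changes), and one must read ``realize $a_d$'' as matching the closed-loop \emph{companion polynomial}—a specific monic element of $\HH[\lambda]$—rather than merely its right-spectrum classes. This distinction is precisely what makes $K$ genuinely unique: many gains could place the closed-loop right eigenvalues in the same $2$-spheres, but only one produces the exact coefficient string $[d_0,\dots,d_{n-1}]$ in companion coordinates, and that is the gain the construction returns.
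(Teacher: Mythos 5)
Your proof is correct and takes essentially the same route as the paper's: pass to controller coordinates where the rank-one update $B_cK_c$ alters only the last row of $A_c$, match coefficients via $k_i=d_i-a_i$, invoke Theorem~\ref{ThAcBcroots} together with similarity invariance of the right spectrum, and obtain uniqueness from the bijectivity of the coefficient update and of $K\mapsto K_c=KT$. Your opening paragraph on feedback preserving controllability is correct but superfluous (the closed loop is manifestly in companion form in controller coordinates), and your closing clarification—that uniqueness means matching the companion polynomial $a_d$ itself rather than merely its right-spectrum classes—is a point the paper's proof leaves implicit.
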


\begin{proof}
In controllable coordinates, with $K_c:=KT$, we have
\[
A_{\mathrm{cl},c}:=T^{-1}(A-BK)T=A_c-B_cK_c.
\]
Writing $K_c=[k_0\ k_1\ \cdots\ k_{n-1}]$, the closed-loop companion polynomial $a_{\mathrm{cl}}(\lambda)$ reads
\begin{IEEEeqnarray*}{C}
(a_0+k_0)+(a_1+k_1)\lambda+\cdots+(a_{n-1}+k_{n-1})\lambda^{n-1}+\lambda^n.
\end{IEEEeqnarray*}
Setting $k_i=d_i-a_i$ for $i=0,\dots,n-1$ yields $a_{\mathrm{cl}}(\lambda)=a_d(\lambda)$.
By the previously established spectral property of the companion form, the right spectrum of
$A_{\mathrm{cl},c}$ is the set of similarity classes determined by the right zeros of
$a_{\mathrm{cl}}$, hence by those of $a_d$. Since $A_{\mathrm{cl}}$ is similar to
$A_{\mathrm{cl},c}$, the same holds for $A_{\mathrm{cl}}$.

Uniqueness holds for the prescribed polynomial $a_d$: the coefficient update
$k_i=d_i-a_i$ is bijective, and $K\mapsto K_c=KT$ is bijective because $T$ is invertible.
Hence there is a unique feedback whose closed-loop companion polynomial equals $a_d$.
Finally $K=K_cT^{-1}$.
\end{proof}

\begin{Ex}[\emph{Example \ref{ExCompanion} continued}]\label{ExSSCompan}
With the companion form \eqref{Eq0ExAcBc}, the similarity matrices \eqref{Eq0ExTi}, and the companion polynomial \eqref{Eq0Excomppol} in hand,  we now consider several choices of desired polynomials:\\
\emph{(a) Real targets:}  $a_d(\lambda)=\lambda^2+3\lambda+2$ ($\lambda_1=-1,\lambda_2=-2$) gives
\begin{IEEEeqnarray*}{lCl}
    K_c&=&
    \begin{bmatrix}
      3-\i+\j-\k & 4+\i-\j+\k
    \end{bmatrix}, \\
    K&=&
    \begin{bmatrix}
    2.5+\i+2.5\k  &  -1.5+\i-1.5\k
    \end{bmatrix},  \IEEEyesnumber \label{Eq0ExKreal}
\end{IEEEeqnarray*}
and the closed-loop matrix
\begin{IEEEeqnarray*}{rCl}
    A_{\rm cl} &=& 
    \begin{bmatrix}
    -1.5-\i-2.5\k & 1.5+1.5\k \\  
     2.5-2.5\k  & -1.5-\j +2.5\k
    \end{bmatrix}    \IEEEyesnumber \label{Eq0ExAclreal}
\end{IEEEeqnarray*}
has the right spectrum $\sigma_\err(A_{\rm cl}) = \{[-1],[-2]\}$ as desired.\\
\emph{(b) Complex pair:} 
 $a_d(\lambda)=\lambda^2+2\lambda+2$
 ($\lambda_{1,2}=-1\pm \i$) yields
\begin{IEEEeqnarray*}{lCl}
 K_c&=&
    \begin{bmatrix}
    3-\i+\j-\k  &   3+\i-\j+\k
    \end{bmatrix} ,    \\
 K&=&
    \begin{bmatrix}
    2+\i+2\k & -1+\i-\k  
    \end{bmatrix} ,   \IEEEyesnumber \label{Eq0ExKcomplex} 
\end{IEEEeqnarray*}
and
\begin{IEEEeqnarray}{rCl}\label{Eq0ExAclcomplex}
 A_{\rm cl} &=&
    \begin{bmatrix}
    -1-\i-2\k & 1+\k \\
     2-2\k & -1-\j+2\k
    \end{bmatrix} .   
\end{IEEEeqnarray}
Here \(\sigma_\err(A_{\mathrm{cl}})=\{[-1+\i]\}\); equivalently, \(-1\pm\i\) are the two canonical complex representatives of the assigned spherical class.\\
\emph{(c) Quaternionic coefficients:}
All displayed numerical values in this example are rounded independently from full-precision computations; therefore, the printed intermediate quantities are not intended to reproduce one another exactly.
Demanding quaternionic target classes, represented for example by \(\lambda_1=-1+\j\) and \(\lambda_2=-2+\k\), yields the following rounded target polynomial
\begin{IEEEeqnarray}{rCl}
    a_d(\lambda)&=&
    2.7-\i-1.3\j+0.33\k \IEEEnonumber \\
    && + (3-0.67\i-0.33\j-0.33\k)\lambda + \lambda^2.
    \label{Eq0Exadlambdaquat}
  \end{IEEEeqnarray}
Here \(a_d(\lambda)\in\mathbb{H}[\lambda]\) has nonreal (noncentral) coefficients, but the companion-form coefficient update still applies. Using
\(K_c=[d_0-a_0,\dots,d_{n-1}-a_{n-1}]\), we obtain 
\begin{IEEEeqnarray*}{rCl}
 K_c&=&
        \narrowbmatrix{ 3.7\!-\!2\i\!-\!0.33\j\!-\!0.67\k& 4\!+\!0.33\i\!-\!1.3\j\!+\!0.67\k},
    \\
 K&=&  \narrowbmatrix{3.3\!+\!0.33\j\!+\!2.7\k &   -2\!+\!1.7\i\!+\!0.33\j\!-\!0.67\k},
\end{IEEEeqnarray*}
and
\begin{IEEEeqnarray*}{rCl}
 A_{\rm cl} &=& 
    \narrowbmatrix{
        -2.3\!-\!0.33\j\!-\!2.7\k & 2\!-\!0.67\i\!-\!0.33\j\!+\!0.67\k \\
        2.7\!+\!0.33\i\!+\!\j\!-\!3.3\k & \!-\!0.67\!+\!0.33\i\!-\!1.7\j\!+\!3\k 
    }.
\end{IEEEeqnarray*}
Its right spectrum is
\[
\sigma_\err(A_{\mathrm{cl}})
=\{[-1+\i],\,[-2+\i]\}
=\{[-1+\j],\,[-2+\k]\}.
\]
confirming that the controllable-companion-form coefficient update achieves the prescribed classes even with quaternionic coefficients.
\end{Ex}

\section{Quaternionic Ackermann Formula}
In this section we derive a quaternionic Ackermann formula. Unlike the companion-form coefficient update, it applies only when the chosen target polynomial \(a_d\) has real (central) coefficients. This still covers real classes and spherical classes arising from complex-conjugate pairs, while the proof proceeds via controllable companion relations rather than determinants or Cayley--Hamilton.
\begin{Rem}
Throughout this section, the desired polynomial \(a_d\) is assumed to belong to \(\mathbb{R}[\lambda]\). Since its coefficients are real and hence central, the left and right scalar evaluations coincide, and the same holds for the corresponding matrix evaluations. Accordingly, we write simply \(a_d(q)\) and \(a_d(A)\), without the subscripts \(\ell\) or \(r\).
This restriction is essential for the Ackermann formula below; for quaternionic coefficients, the formula need not hold.
\end{Rem}

\begin{The}[\emph{Quaternionic Ackermann formula}]
\label{ThAckermann}
Let $A\in\HHnn$ and $B\in\HHn$ be such that the pair $(A, B)$ is controllable with $\mathcal{C} = \left[ B,AB,\ldots,A^{n-1}B \right] \in\HHnn$. Let
\begin{equation}
a_d(\lambda )=d_0+d_1\lambda+\cdots+d_{n-1}\lambda^{n-1}+\lambda^n
\in\RR[\lambda],
  \label{Eq:desiredpol}  
\end{equation}
be a chosen monic real polynomial whose zeros encode the desired closed-loop similarity classes and let $e_n=[0,\dots,0,1]^{\mathsf T}\in\mathbb{R}^{n}$ be the last canonical basis vector. Then the state-feedback row vector
\begin{IEEEeqnarray}{rCl}
  K &=& e_n^\T \mathcal{C}^{-1} a_d(A),
    \label{Eq:qacker}   
\end{IEEEeqnarray}
with $u=-Kx$ yields the closed-loop matrix $A-BK$
whose right spectrum is the set of similarity classes determined by the zeros of \(a_d(\lambda)\):
\begin{IEEEeqnarray}{rCl}
    \sigma_\err(A_{\mathrm{cl}})&=&\{[\lambda]:a_d(\lambda)=0\}.
\end{IEEEeqnarray}
Equivalently, if \(\lambda_1,\dots,\lambda_m\) are representatives of the distinct similarity classes of zeros of \(a_d\), then
\[
\sigma_\err(A_{\mathrm{cl}})=\{[\lambda_1],\dots,[\lambda_m]\}.
\]
The algebraic multiplicities are encoded by the polynomial \(a_d\) itself, not by the set \(\sigma_\err(A_{\mathrm{cl}})\).
\end{The}
\begin{Rem}
The Ackermann step is valid only when \(a_d \in \mathbb{R}[\lambda]\), because real scalars lie in the center of \(\mathbb{H}\), ensuring that \(a_d(A)\) is unambiguous. In this case: closed-loop poles are real or occur as complex-conjugate pairs; left/right zeros coincide; and any nonreal root $\gamma\in\CC$ actually stands for the entire spherical class $[\gamma]\subset\mathbb H$ (choosing $\gamma,\bar\gamma$ assigns that class). 

For an \(n\)th-order single-input design, each real zero contributes one degree, while each nonreal conjugate pair contributes two. Counting algebraic multiplicity, these contributions sum to \(n\).
\end{Rem}

The following Lemma is important in the Ackermann formula proof.
\begin{Lem}[\emph{Polynomial Intertwining}]\label{Lem0swap}
Let $p(\lambda)\in\RR[\lambda]$ be a polynomial with real coefficients. If an invertible matrix $T$ satisfies
\begin{IEEEeqnarray}{rCl}
 A T &=& T\! A_c
 \label{Eq0polintertwin1}
 \end{IEEEeqnarray}
then
\begin{IEEEeqnarray}{rCl}
p\!\left(A\right)\!T &=& Tp\!\left(A_c \right).
\label{Eq0polintertwin2}   
\end{IEEEeqnarray}
\end{Lem}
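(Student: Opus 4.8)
The plan is to prove the intertwining relation \eqref{Eq0polintertwin2} by induction on the power of $A$, exploiting the fact that real coefficients are central in $\HH$. The crucial observation is that relation \eqref{Eq0polintertwin1}, $AT = TA_c$, is a matrix identity that involves no scalar multiplication by quaternions; it is a purely multiplicative relation between quaternionic matrices, so it can be iterated freely.

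First I would establish the monomial case $A^k T = T A_c^{\,k}$ for every $k\ge 0$ by induction. The base case $k=0$ is trivial ($T=T$) and the case $k=1$ is the hypothesis. For the inductive step, I would write $A^{k+1}T = A\,(A^kT) = A\,(TA_c^{\,k}) = (AT)A_c^{\,k} = (TA_c)A_c^{\,k} = TA_c^{\,k+1}$, using associativity of matrix multiplication together with the induction hypothesis and then \eqref{Eq0polintertwin1} again. No commutativity of quaternions is needed here, only associativity, so this step is clean.

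Next I would assemble the polynomial. Writing $p(\lambda)=\sum_{k=0}^{m}p_k\lambda^k$ with $p_k\in\RR$, I would form $p(A)\,T=\sum_{k=0}^m p_k A^k T$. Here is the one place where the real-coefficient hypothesis is essential: because each $p_k$ is a \emph{real} (central) scalar, it commutes with the quaternionic matrix $T$, so $p_k(A^kT)=p_k(TA_c^{\,k})=T(p_kA_c^{\,k})$. Summing over $k$ and factoring $T$ on the left yields $p(A)T=T\sum_{k=0}^m p_k A_c^{\,k}=T\,p(A_c)$, which is exactly \eqref{Eq0polintertwin2}.

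The main obstacle, and really the only subtlety, is the centrality step: if the $p_k$ were genuine (noncentral) quaternions, the manipulation $p_k T = T p_k$ would fail, and one could not pull the coefficients through $T$ to reconstruct $p(A_c)$ on the right. This is precisely why the lemma is stated for $p\in\RR[\lambda]$ and why it is the right tool for the Ackermann formula, whose scope is likewise restricted to real target polynomials. The invertibility of $T$ is not actually used in deriving \eqref{Eq0polintertwin2} itself—it is only needed upstream to guarantee that the companion form $A_c=T^{-1}AT$ exists—so I would not invoke it in the body of the argument.
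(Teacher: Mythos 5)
Your proof is correct and takes essentially the same route as the paper's own: first establish $A^{k}T = TA_c^{\,k}$ for all $k\ge 0$ by induction, then expand $p(A)T$ and use the fact that the real coefficients $p_k$ lie in the center $Z(\mathbb{H})=\mathbb{R}$ to pull them through $T$ and reassemble $T\,p(A_c)$. Your side remark that the invertibility of $T$ is never actually used in the derivation is also accurate (the paper's proof likewise never invokes it).
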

\begin{proof}
From $ AT =T A_c$ it follows by induction that
\begin{IEEEeqnarray}{rCl}
    A^k T &=& T A_c^{\,k}\quad (k=0,1,\dots).
    \label{Eq0polintertwink} 
\end{IEEEeqnarray}
Write $p(\lambda)=\sum_{k=0}^n p_k \lambda^k$ with
$p_k\in\mathbb{R}.$ Then
\begin{IEEEeqnarray*}{rCl}
    p(A)T
&=& \sum_{k=0}^n p_k A^k T
= \sum_{k=0}^n p_k T\!  A_c^{\,k}
\\
&=& T \sum_{k=0}^n p_k A_c^{\,k}
= T p(A_c),
\end{IEEEeqnarray*}
where we used that real coefficients lie in the center $Z(\mathbb H)=\mathbb R$ and commute with 
$T$ and with powers of $A_c.$
\end{proof}
\begin{Rem}
The lemma generally fails if $p\notin\mathbb{R}[\lambda]$. Although one may of course define
$p(A_c) = p(T^{-1}AT)$
by direct substitution, the intertwining identity
\begin{IEEEeqnarray}{rCl}
  p(A)\,T &\stackrel{?}{=} & T\,p(A_c).
\label{Eq:polNOTintertwin}   
\end{IEEEeqnarray}
need not hold once any coefficient of 
$p$ is a nonreal quaternion.
Indeed, take $A=\i$ and $T = \k$ so that $A_c=T^{-1}AT=-\i,$
and $p(\lambda)=\lambda+\i$. We have
$p(A)T=(\i+\i)k=2\i\k=-2\j$ but $Tp(A_c)=k(-\i+\i)=0$,
hence $p(A) T\neq T p(A_c)$.
This illustrates why the Ackermann proof below requires $p\in\mathbb{R}[\lambda]$: real coefficients ensure unambiguous evaluation and permit the coefficient–matrix “swap” used in the intertwining step.
\end{Rem}
\begin{proof}[Proof of Theorem~\ref{ThAckermann}]
Classical proofs over $\RR$ or $\CC$ lean on determinants and adjoints \cite{Kailath1980,Chen1999}, characteristic polynomials \cite{Chen1999,Franklin2019}, or Cayley--Hamilton; see also Ackermann’s original paper \cite{Ackermann1972}. These arguments do not transfer verbatim to $\HH$ because of noncommutativity and left/right evaluation. We therefore proceed differently.

\noindent\emph{Step 1: Controllable form and coefficient matching.}
Let an invertible similarity $T$ put $(A,B)$ into lower controllable companion form:
\begin{IEEEeqnarray*}{rCl}
A_c&=&T^{-1}AT,\qquad B_c=T^{-1}B.
\end{IEEEeqnarray*}
Write the open-loop companion polynomial as
\begin{IEEEeqnarray*}{rCl}
a(\lambda)=a_0+a_1\lambda+\cdots+a_{n-1}\lambda^{n-1}+\lambda^n .
\end{IEEEeqnarray*}
Given the desired real monic closed-loop polynomial
\begin{IEEEeqnarray}{rCl}
a_d(\lambda)=d_0+d_1\lambda+\cdots+d_{n-1}\lambda^{n-1}+\lambda^n,
\end{IEEEeqnarray}
the controllable-form feedback row
\begin{IEEEeqnarray}{rCl}
K_c&=&
\begin{bmatrix}
d_0-a_0&\ldots&d_{n-1}-a_{n-1}
\end{bmatrix}
\label{Eq0kcda2}
\end{IEEEeqnarray}
assigns, by Theorem~\ref{ThAcBcroots}, the right spectrum of \(A_c-B_cK_c\) to the set of similarity classes determined by the right roots of \(a_d\). 
The multiplicities are encoded in the factorization of \(a_d\), not in the set-valued spectrum \(\sigma_\err(\cdot)\).
The feedback in the original coordinates is
\begin{IEEEeqnarray*}{rCl}
K&=&K_cT^{-1}.
\end{IEEEeqnarray*}

\noindent\emph{Step 2: Controllable-form Ackermann equals coefficient matching.}
Let $\mathcal{C}_c=\left[B_c,A_cB_c,\ldots,A_c^{n-1}B_c\right]$ be the controllability matrix of the controllable form \eqref{Eq0AcBcI}. The controllable-form Ackermann expression
\begin{IEEEeqnarray}{rCl}
K_c&=&e_n^\T \mathcal{C}_c^{-1} a_d(A_c)
\label{Eq0qackercon}
\end{IEEEeqnarray}
produces exactly the same $K_c$ as above. Indeed, applying Theorem~\ref{ThCompanion} to the already-companion pair $(A_c,B_c)$ gives $T=I$, hence
\begin{IEEEeqnarray*}{rCl}
e_n^\top\mathcal{C}_c^{-1}=e_1^\top.
\end{IEEEeqnarray*}
Therefore
\begin{IEEEeqnarray*}{rCl}
K_c&=&e_1^\T a_d(A_c).
\end{IEEEeqnarray*}
Furthermore, the shape of $A_c$ yields
\begin{IEEEeqnarray*}{rCl}
e_1^\T A_c^k&=&e_{k+1}^\top,\qquad k=0,\dots,n-1,\\
e_1^\T A_c^n&=&e_1^\T A_c^{n-1}A_c
=e_n^\top A_c
=\big[-a_0,\dots,-a_{n-1}\big].
\end{IEEEeqnarray*}
Since the coefficients $d_i$ are real and hence central,
\begin{IEEEeqnarray*}{rCl}
K_c&=&e_1^\top a_d(A_c)\\
&=&\big[d_0,\dots,d_{n-1}\big]+\big[-a_0,\dots,-a_{n-1}\big]\\
&=&
\begin{bmatrix}
d_0-a_0&\ldots&d_{n-1}-a_{n-1}
\end{bmatrix},
\end{IEEEeqnarray*}
which is exactly the coefficient-matching gain \eqref{Eq0kcda2}.

\noindent\noindent\emph{Step 3: Coordinate-free form and mapping back.}
Let $\mathcal{C}$ be the controllability matrix \eqref{Eq0CONmat} of $(A,B)$. From the similarity, for $k\ge 0$ we have
\begin{IEEEeqnarray*}{rCl}
B&=&TB_c,\qquad A^kT=TA_c^{\,k}.
\end{IEEEeqnarray*}
Hence
\begin{IEEEeqnarray}{c}
\mathcal{C}=T\mathcal{C}_c
\;\Longrightarrow\;
\mathcal{C}^{-1}=\mathcal{C}_c^{-1}T^{-1},
\qquad
T=\mathcal{C}\mathcal{C}_c^{-1}.
\label{Eq0CiCc}
\end{IEEEeqnarray}
Because $a_d\in\mathbb{R}[\lambda]$, Lemma~\ref{Lem0swap} gives
\begin{IEEEeqnarray*}{rCl}
T^{-1}a_d(A)&=&a_d(A_c)T^{-1}.
\end{IEEEeqnarray*}
Therefore
\begin{IEEEeqnarray}{rCl}\label{Eq0ackercf}
K&=&e_n^\T \mathcal{C}^{-1} a_d(A)
=e_n^\top \mathcal{C}_c^{-1}T^{-1}a_d(A)\nonumber\\
&=&\big(e_n^\top \mathcal{C}_c^{-1}a_d(A_c)\big)T^{-1}
=K_cT^{-1},
\end{IEEEeqnarray}
since $K_c=e_n^\T \mathcal{C}_c^{-1}a_d(A_c)$.

Thus, the basis-invariant formula \eqref{Eq:qacker} produces exactly the same feedback as the controllable-form construction, completing the proof.
\end{proof}
\begin{Ex}[\emph{Example \ref{ExSSCompan} continued}] \label{ExAcker}
Continuing the previous example, we already have the controllability inverse $\mathcal{C}^{-1}$ from \eqref{Eq0CONi}; we now consider specific target polynomials. \\
\emph{(a) Real targets:}  $a_d(\lambda)=\lambda^2+3\lambda+2$ ($\lambda_1=-1,\lambda_2=-2$) gives 
\begin{IEEEeqnarray*}{rCl}
    a_d(A)=
    \begin{bmatrix}
      6+\k  &    4\i-\j \\
    -\i+4\j &   1+2\k
    \end{bmatrix}.
\end{IEEEeqnarray*} 
and the Ackermann formula \eqref{Eq:qacker} yields
\begin{IEEEeqnarray*}{rCl}
    K&=&
    \begin{bmatrix}
    2.5+\i+2.5\k &   -1.5+\i-1.5\k  
    \end{bmatrix},
\end{IEEEeqnarray*}
which equals \eqref{Eq0ExKreal}. The resulting closed-loop matrix
$A_{\rm cl}=A-BK$ is identical to \eqref{Eq0ExAclreal} and its right spectrum is $\sigma_\err(A_{\rm cl}) = \{[-1],[-2]\}$ as desired.\\
\emph{(b) Complex pair:} $a_d(\lambda)=\lambda^2+2\lambda+2$
($\lambda_{1,2}=-1\pm \i$) gives rise to
\begin{IEEEeqnarray*}{rCl}
    a_d(A)&=&
    \begin{bmatrix}
    5+\k &  3\i-\j \\
    -\i+3\j & 1+\k 
    \end{bmatrix},
\end{IEEEeqnarray*}
and
\begin{IEEEeqnarray*}{rCl}
 K=
    \begin{bmatrix}
    2+\i+2\k & -1+\i-\k  
    \end{bmatrix}     
\end{IEEEeqnarray*}
which equals \eqref{Eq0ExKcomplex}. The closed-loop matrix
\(A_{\rm cl}\) is identical to \eqref{Eq0ExAclcomplex} and
\(\sigma_\err(A_{\mathrm{cl}})=\{[-1+\i]\}\). \\
\emph{(c) Quaternionic coefficients, for which the formula fails:}
All displayed numerical values in this example are rounded independently from full-precision computations; therefore, the printed intermediate quantities are not intended to reproduce one another exactly.
The Ackermann formula cannot assign quaternionic target classes represented, for example, by \(\lambda_1=-1+\j\) and \(\lambda_2=-2+\k\). If one nevertheless applies~\eqref{Eq:qacker}, incorrectly interpreting
\[
a_d(A)\quad\text{as}\quad a_d(A)_{\err}=d_0I+d_1A+A^2,
\]
then Ackermann returns
\begin{IEEEeqnarray*}{rCl}
    K&=&
    \narrowbmatrix{
   3.5\!+\!1.3\i\!-\!0.5\j\!+\!3\k& -1.5\!+\!2\i\!-\!0.5\j\!-\!1.7\k
    }.
\end{IEEEeqnarray*}
However, this feedback results in
\begin{IEEEeqnarray*}{rCl}
A_{\mathrm{cl}}&=&
\narrowbmatrix{
-2.5\!-\!1.3\i\!+\!0.5\j\!-\!3\k & 1.5\!-\!\i\!+\!0.5\j\!+\!1.7\k\\[2pt]
3\!-\!0.5\i\!-\!0.33\j\!-\!3.5\k & -1.7\!-\!0.5\i\!-\!2\j\!+\!2.5\k}
\end{IEEEeqnarray*}
whose right spectrum is \(\{[-0.48+0.85\i],\,[-3.7+2.4\i]\}\), confirming that when \(a_d\notin\mathbb{R}[\lambda]\), the Ackermann construction does not enforce the prescribed classes.
\end{Ex}

\section{Conclusion}
This note developed a quaternionic framework for single-input state-feedback pole placement. We formalized controllable companion forms over $\HH$ and their companion polynomials satisfying \(a(A_c)_{\err}=0\), and presented two designs: coefficient matching in companion coordinates and a coordinate-free Ackermann gain for real (central) targets. The approach mirrors classical workflows while making quaternionic specifics explicit (similarity-class poles and order-sensitive evaluation).

Multi-input extensions should follow via block companion/Brunovsky forms with the same central-polynomial calculus. Future work includes observer duals, robustness, numerical conditioning, and output-feedback design.



\end{document}